\newtheorem{lemma}{Lemma}
\crefname{lemma}{Lemma}{Lemmas}
\keywords{Transformers, Pruning, Efficient Fine-tuning and Inference}
\begin{document}

\title{	An Adaptive Language-Agnostic Pruning Method for Greener Language Models for Code}

\author{Mootez Saad}
\orcid{0009-0008-8159-3632}
\affiliation{%
  \institution{Dalhouise University}
  \city{Halifax}
  \country{Canada}}
\email{mootez@dal.ca}

\author{Jos\'e Antonio Hern\'andez L\'opez}
\orcid{0000-0003-2439-2136}
\affiliation{%
  \institution{Linköping University}
  \city{Link\"oping}
  \country{Sweden}}
\email{jose.antonio.hernandez.lopez@liu.se}

\author{Boqi Chen}
\orcid{0000-0002-1451-3603}
\affiliation{%
  \institution{McGill University}
  \city{Montr\'eal}
  \country{Canada}}
\email{boqi.chen@mail.mcgill.ca}

\author{D\'aniel Varr\'o}
\orcid{0000-0002-8790-252X}
\affiliation{%
  \institution{Link\"oping University}
  \city{Link\"oping}
  \country{Sweden}}
\email{daniel.varro@liu.se}

\author{Tushar Sharma}
\orcid{0000-0002-0538-052X}
\affiliation{%
  \institution{Dalhouise University}
  \city{Halifax}
  \country{Canada}}
\email{tushar@dal.ca}


\begin{abstract}
Language models of code have demonstrated remarkable performance across various software engineering and source code analysis tasks. However, their demanding computational resource requirements and consequential environmental footprint remain as significant challenges.
This work introduces \alpine{}, an {a}daptive programming {l}anguage-agnostic {p}run{in}g techniqu{e} designed to substantially reduce the computational overhead of these models. 
The proposed method offers a pluggable layer that can be integrated with all Transformer-based models.
With \alpine{},
input sequences undergo adaptive compression throughout the pipeline, reaching a size that is up to $\times 3$ less their initial size, resulting in significantly reduced computational load.
Our experiments on two software engineering tasks, \textit{defect prediction} and \textit{code clone detection} across three language models \cb{}, \gcb{} and \unx{} show that \alpine{} achieves up to a 50\% reduction in FLOPs, a 58.1\% decrease in memory footprint, and a 28.1\% improvement in throughput on average. This led to a reduction in CO\textsubscript{2} emissions by up to $44.85$\%. Importantly, it achieves a reduction in computation resources while maintaining up to 98.1\% of the original predictive performance. 
These findings highlight the potential of \alpine{} in making language models of code more resource-efficient and accessible while preserving their performance,
contributing to the overall sustainability of their adoption in software development.
Also, it sheds light on redundant and noisy information in source code analysis corpora, as shown by the substantial sequence compression achieved by \alpine{}.

\end{abstract}

\maketitle

\section{Introduction}
    

Transformer-based language models~\cite{Vaswani2017} were initially conceived for natural language understanding. Under the \textit{naturalness} hypothesis~\cite{Hindle2012}, they were applied to solve diverse software engineering ({\sc se}) tasks with remarkable accuracy. These tasks include
code generation~\cite{Lai2023ds, Zeng2022Generation, Dibia2023, Chen2022LR}, quality assurance~\cite{Alqarni2022, Ciborowska2023, Fatima2023, Tang2023CSGVD}, maintenance~\cite{Tian2023Best, Zhang2024APTC, LeCong2023, Paul2023automated}, and requirements engineering~\cite{Moharil2023, Ezzini2022}.

Though the language models have shown exceptional promise,
fine-tuning or running inference on these models require significant computational resources.
Moreover, the demand of the resources is continuously increasing with the size of the language models.
This entails that their training grows correspondingly, leading to increased usage of cloud computing services which operate on a pay-per-use basis. 
This inflates the direct financial investment for researchers and average users with consumer-grade hardware. 
In addition to the financial impact, the carbon footprint of training and maintaining these models is a growing environmental concern as they contribute to higher CO\textsubscript{2} emissions~\cite{Luccioni2023power, Lacoste2019Quantifying, Wang2013Energy}. 
These issues underscore the urgency for more resource efficient algorithms.


Shi \etal{}~\cite{Shi2023} have proposed \textit{Compressor} as a step towards this direction. \textit{Compressor} combines task-specific knowledge distillation~\cite{Hinton2015DistillingTK} and evolutionary search of network architecture to generate a configuration that yields a smaller model. Task-specific knowledge distillation often falls short in terms of task-specificity and computational overhead. The distilled student models are typically optimized for a specific task or domain, limiting their generalization abilities. Moreover, the process of task-specific knowledge distillation itself can be computationally expensive, as it involves \emph{fine-tuning the teacher model either way}, generating predictions or soft labels, pre-training, and fine-tuning the student model.


Similarly, \textit{DietCode}~\cite{Zhang2022DietCode} attempts to achieve better resource utilization by assigning importance scores to tokens and statements based on attention weights and then selecting the most important statements to form a new, condensed input. However, it has two main limitations. First, it relies on the peculiarities of the task's programming language, such as specific syntax, semantics, and constructs. This reliance makes extending the method to all programming languages difficult without overhead and customization. Second, it assumes that the attention weights remain static and applicable across different corpora and tasks. However, this is not the case given that these attention weights are derived from trainable parameters, which are updated when the language model is fine-tuned on other tasks.

LTP~\cite{Kim2022LTP} covers some of \textit{DietCode}'s limitations. It proposes a method that learns a threshold at each Transformer layer and tokens are pruned if their importance score is below such threshold. Concretely, the pruning is done by zeroing out the hidden representations of these tokens. However, such an approach does not effectively reduce computation. While the method assigns zero values to pruned tokens by multiplying the hidden representations with a mask\footnote{LTP's implementation: \url{https://t.ly/qpu4J}}, the sequence length remains unchanged. Consequently, these zero-valued representations still participate in subsequent matrix multiplications and other operations. In most hardware and software implementations, multiplications involving zero are not automatically optimized out. As a result, the same number of floating-point operations (FLOPs) are performed regardless of how many tokens are ``pruned" leading to no actual reduction in computational cost. Finally, LTP requires multiple fine-tuning steps: in the first pass, it learns the thresholds as well as full model parameters, then fine-tunes the model’s weights in the second pass using those thresholds.

To tackle the aforementioned limitations, we present \alpine{}, a language-agnostic token pruning technique seamlessly integrated into \textit{any} Transformer-based model. This approach involves placing a pruning module within the Transformer block. \alpine{} processes a code snippet, generating a set of importance scores for each token. Based on these scores and a configurable importance range, it completely eliminates tokens that exist outside the range. Consequently, the reduced input size yields considerable speed up {in both fine-tuning and inference phases}. Notably, \alpine{} operates without requiring supplementary code information, such as statement types, ensuring its programming-language independence. Furthermore, the importance scores are computed during the fine-tuning process, enabling it to adapt dynamically to various downstream Software Engineering (SE) tasks. 

We evaluate our proposed method on two software engineering tasks \textit{defect prediction} and \textit{code clone detection} using three language models, \cb{}, \gcb{} and \unx{}. Our approach allows these models to use $\times 2$ less computation, measured by the number of floating point operations (\flops{}) and memory footprint, with minimal loss in performance and even slight improvement in some scenarios. 

This study makes the following contributions to the field.
\begin{itemize}
    \item We conduct an analysis to quantify the computational requirements of the individual components within the Transformer model measured by \flops{}. The analysis helps us identify the bottleneck components and understand the changes in computational complexity with respect to the input sequence length.
    \item Using these insights, we design an effective pruning method that makes language models computationally efficient. Its plug-and-play characteristic makes it easy to integrate and use with virtually any Transformer-based model with minimal overhead.
    \item We illustrate how \alpine{} can maintain a considerable amount of the original accuracy exhibited by the none-pruned models with substantially less computation and carbon emission.
\end{itemize}

\alpine{}'s ability to reduce computational requirements while maintaining accuracy has two significant implications. 
First, 
it contributes to more sustainable software development practices by reducing the required computation and carbon footprint associated with such models.
Second, it lowers the barriers to entry for developers and researchers, facilitating the adoption of language models for code, especially when constrained by consumer-grade \gpu{}s. As a result, it offers a more accessible and environmentally friendly solution for the software engineering community.
\newline
\textbf{Replication package}: Our replication package including source code and data is available online~\cite{replication}.
\section{Background and Motivation}
In this section, we provide a brief primer on the Transformer architecture~\cite{Vaswani2017} and the implications of its computational requirements. Then, we investigate the number of operations performed by the main blocks of this architecture to identify a potential bottleneck within a certain class of Transformer-based models.
\subsection{Transformer Architecture}
The Transformer architecture, introduced by Vaswani \etal{}~\cite{Vaswani2017} for natural language processing ({\sc nlp}), is a sequence-to-sequence model composed of two main components: an encoder and a decoder. This work focuses on the encoder-only variant, one of the most used variants in SE-related tasks~\cite{Hou2023LargeLM}. We further elaborate in Section~\ref{discussion:extention} how \alpine{} can be extended to encoder-decoder and decoder-only models. For the rest of the paper, we refer to a Transformer encoder layer by a Transformer layer for convenience.

\noindent
\textbf{Input Representation:} The input fed into the Transformer layer consists of the tokenized sequence of tokens $T$ and an attention mask $M$. $T$ includes a special \textbf{\texttt{[CLS]}} token, whose representation is used for classification or ranking tasks, and a \textbf{\texttt{[SEP]}} to indicate the end of the sequence, followed by padding tokens \texttt{\textbf{[PAD]}}, if necessary\footnote{Appending two sequences is also another format that can be used as input: I$= [t_{cls}, t_1, t_2, \ldots, t_{sep}, u_1, u_2, \ldots, t_{pad_m}]$}. 
\begin{equation*}
T = [t_{cls}, t_1, t_2, \ldots, t_n, t_{sep}, t_{pad_1}, \ldots, t_{pad_m}]
\end{equation*}
The attention mask is used to indicate which tokens should be attended to and which ones should be ignored,
\[
M[i] = 
\begin{cases}
    0 & \text{if } t_{i} = \text{PAD}, \\
    1 & \text{otherwise.}
\end{cases}
\]
The input representation is then passed through an embedding layer to obtain dense vector representations for each token, which serve as the input to the encoder layer.


\noindent
\textbf{Main blocks}: The main blocks of this layer are the Multi-head Self-Attention (\mha{}) which is composed of $h$ attention heads, and the Position-wise Feed-forward network (\ffnn{}). \mha{} is defined as follows
\vspace{-2mm}
\begin{equation}
    \text{\mha{}(x)} = \sum_{i = 1}^{h} Attn_{i}(x), \,\text{y\textsubscript{\mha{}}} = LayerNorm(x + \text{\mha{}(x)}),
    \label{eqn:mha_formulation}
\end{equation}

\noindent where $Attn(\cdot)$ is the scaled dot attention function introduced by Vaswani \etal{}~\cite{Vaswani2017}, and $x$ is the input sequence. The derivation of Equation~\ref{eqn:mha_formulation} is taken from the work of Kobayashi \etal{}~\cite{Kobayashi2020} and Saad and Sharma~\cite{Saad2023}. After the \mha{}, the \ffnn{} block is applied to each position of the sequence independently. It is a subnetwork composed of two linear layers with the application of a non-linear activation function ReLU in between.
\begin{align}
    \begin{split}
        \text{\ffnn{}(x)}&= Linear_{2}(ReLU(Linear_{1}(\text{z})) \\
    \label{eqn:ffnn_formulation}
    \end{split}
\end{align}
Note that, at the end of each block, the Transformer layer employs residual connections followed by a layer normalization to facilitate the flow of information and stabilize the training process.
\begin{align}
    \begin{split}
        \text{y}&= LayerNorm(\text{y\textsubscript{\mha{}}} + \text{\ffnn{}(y\textsubscript{\mha{}})})
    \label{eqn:ffnn_residual}
    \end{split}
\end{align}


\subsection{Motivation}
\label{subsec:motivation}
\subsubsection{\underline{Impact on Energy Consumption}}
Language models for code require a significant amount of computational power to train and run. The number of parameters in these models has increased from millions to billions~\cite{Hou2023,Zhao2023}, necessitating the use of high-performance computing systems and specialized hardware, such as graphics processing units (\gpu{}s) and tensor processing units ({\textsc{tpu}}s).
The training of such models involves processing large amounts of data, often in the order of terabytes. This requires not only powerful processors but also high-bandwidth memory and storage systems. As a result, the energy consumed during the training phase is substantial, as it involves running these models for extended periods that can reach the magnitude of days. Table~\ref{tab:model_training_time} illustrates the training time of different models of various sizes.

\begin{table}[h!]
  \centering
  \caption{Training time of different language models for code.}
  \begin{tabular}{@{}lll@{}}
    \toprule
    Model & \# Parameters & Training Time \\
    \midrule
    UniXCoder~\cite{Guo2022} & 125M & 8 days \\
    CodeT5\textsubscript{\sc{base}}~\cite{Wang2021codet5} & 220M & 12 days \\
    InCoder~\cite{Fried2023} & 6.7B & 24 days \\
    \bottomrule
    \label{tab:model_training_time}
  \end{tabular}
\end{table}

The energy consumption of language models is directly proportional to their computational requirements. As the models become more complex and the amount of data they process increases, the energy needed to power the computing systems also rises. This results in a higher carbon footprint of these models. Wang \etal{}~\cite{Wang2013Energy} have shown that the carbon footprint of pre-training the {\sc bert} model~\cite{Devlin2018} on a machine with one NVIDIA Quadro RTX$8000$ \gpu{} is $199$ kg CO\textsubscript{2}. This is equivalent to the same amount emitted by driving a vehicle for a distance of $820$ kilometers\footnote{Wang \etal{}~\cite{Wang2013Energy} report the distance in miles}~\cite{Wang2013Energy}. 
Moreover, we are witnessing a notable trend of models increasing in size leading to even higher environmental impact.
For instance, recently, Lozhkov \etal{}~\cite{Lozhkov2024Starcoder} released StarCoder 2, a family of language models for code with sizes ranging from $3$B to $15$B parameters. They reported that the smallest model has resulted in $15K$ kg of CO\textsubscript{2} emission during pre-training.

\noindent
\subsubsection{\underline{Computational Cost of the Transformer Layer}} This section explores the computational complexity and memory requirements of both the \mha{} and \ffnn{} layers.

The computational cost of a neural layer can be measured in terms of the number of floating-point operations (\flops{}) performed by such layer. 
We calculate the theoretical operations count, \ie{} the upper bound when a sequence has the length of the maximum number of tokens. 
The \flops{} counting rules introduced by Hunger~\cite{Hunger2005floating} for matrix multiplications are used in our work. 
Specifically, the rule stating that the number of \flops{} of a matrix $A_{M \times N}$ multiplied with another matrix $B_{N \times L}$ is $2MNL-ML$. Note that here $(M\times N)$ and $(N\times L)$ represent the dimensions of the matrices $A$ and $B$, respectively.

\noindent
\textbf{FLOPs of the \mha{}:}  Let $n$ be the maximum number of tokens, $d_{\text{\mha{}}}$ the hidden dimension, and $h$ the number of heads. 
The first step includes three linear projections of the input $X_{n\times d_{\text{\mha{}}}}$ with the $W^{Q}_{d_{\text{\mha{}}}\times \frac{d_{\text{\mha{}}}}{h}}$, $W^{K}_{d_{\text{\mha{}}}\times \frac{d_{\text{\mha{}}}}{h}}$, and $W^{V}_{d_{\text{\mha{}}}\times \frac{d_{\text{\mha{}}}}{h}}$ weight matrices at each attention head. These operations involve matrix multiplication and addition (for the bias term). The total number of operations at this stage is calculated as follows.

\begin{equation}
\text{F\textsubscript{MatMul}} = 2 \times n \times d_{\text{\mha{}}} \times \left(\frac{d_{\text{\mha{}}}}{h}\right) - n \times \left(\frac{d_{\text{\mha{}}}}{h}\right)
\label{eqn:first_lin_proj}
\end{equation}

\begin{align}
\begin{split}
    \text{F\textsubscript{LinearProj}} &= h \times 3 \times \text{\text{F\textsubscript{MatMul}}} \\
&= 6 \times n \times d_{\text{\mha{}}}^{2} - 3 \times n \times d_{\text{\mha{}}}
\end{split}
\end{align}

The scaled dot-product attention is calculated by multiplying the resulting query matrix $Q_{n \times \frac{d_{\text{\mha{}}}}{h}}$ with the transpose key matrix $K^{T}_{\frac{d_{\text{\mha{}}}}{h} \times n}$ coupled with the application of the scaled \textit{softmax} function.

\begin{equation}
    \text{F\textsubscript{QK}} = 2 \times n \times \left(\frac{d_{\text{\mha{}}}}{h}\right) \times n - n^{2}
\end{equation}
\begin{equation}
    \text{F\textsubscript{SoftmaxScaling}} = 2 \times n^{2}
\end{equation}
\begin{align}
    \begin{split}
        \text{F\textsubscript{ScaledDotAttn}} {}&=  \text{F\textsubscript{QK}} +  \text{F\textsubscript{SoftmaxScaling}} \\
            &= 2 \times n^{2} \times d + h \times n^{2}
    \end{split}
\label{eqn:attn_prob_flops}
\end{align}
    
The next operation in the \mha{} layer is the multiplication of the attention probability matrices of each head $A_{n \times n}$ obtained in Equation~\ref{eqn:attn_prob_flops} with the value matrix $V_{n \times \frac{d_{\text{\mha{}}}}{h}}$ that was calculated from one of the linear projections performed earlier.
\vspace{-2mm}
\begin{align}
    \begin{split}
        \text{F\textsubscript{Attn}} &= h \times \left(2 \times n \times n \times \left(\frac{d_{\text{\mha{}}}}{h}\right) - n \times \left(\frac{d_{\text{\mha{}}}}{h}\right)\right) \\
        &= 2 \times n^{2} \times d_{\text{\mha{}}} - n \times d_{\text{\mha{}}}
    \end{split}
\end{align}

Finally, the outputs of all attention heads are concatenated and projected through a linear layer. The concatenation operation requires 0 operations, hence, the only \flops{} performed are from the matrix multiplication of the concatenated matrix $Z_{n \times d_{\text{\mha{}}}}$ with $W^{O}_{d_{\text{\mha{}}} \times d_{\text{\mha{}}}}$.

\begin{equation}
    \text{F\textsubscript{FinalProj}} = 2 \times n \times d_{\text{\mha{}}}^{2} - n \times d_{\text{\mha{}}}
    \label{eqn:final_mha}
\end{equation}

Summing the Equations~\ref{eqn:first_lin_proj}-\ref{eqn:final_mha}, we get the following formula to determine the \flops{} count of the attention layer.
\vspace{-1mm}
\begin{equation}
    \text{F\textsubscript{\mha{}}} = 8 \times n \times d_{\text{\mha{}}}^{2} + 4 \times n^{2} \times d_{\text{\mha{}}} - 4 \times n \times d_{\text{\mha{}}} + h \times n^{2}
    \label{eqn:total_flops_mha}
\end{equation}

\noindent
\textbf{FLOPs of the \ffnn{}:} The \ffnn{} takes the output of the \mha{}'s layer as input, scales it to a higher dimension than the \mha{}'s, applies a {\sc gelu} transformation~\cite{Hendrycks2016gaussian}\footnote{While the Transformer layer was stated earlier to use ReLU activation, the models used in this work replace it with the {\sc gelu} function. Note that this does not affect the calculations performed above.}, and scales it back to the original hidden dimension.

Following the same notation, let $n$ be the number of tokens in a sequence and $d_{\text{\ffnn{}}}$ the dimension of the \ffnn{} layer.
The \flops{} count for the first linear layer includes a multiplication between the output matrix of the \mha{} $M_{n \times d_{\text{\mha{}}}}$ with the weight matrix of this layer $W^{L}_{d_{\text{\mha{}}} \times d_{\text{\ffnn{}}}}$ proceeded by the application of the {\sc gelu} function.
\vspace{-1mm}
\begin{align}
    \begin{split}
    \text{F\textsubscript{FirstLinProj}} &= 2 \times n \times d_{\text{\mha{}}} \times d_{\text{\ffnn{}}} - n \times d_{\text{\ffnn{}}} \\
    \text{F\textsubscript{GELU}} &= n \times d_{\text{\ffnn{}}} \\
    \text{F\textsubscript{FirstLayer}} &= \text{F\textsubscript{FirstLinProj}} + \text{F\textsubscript{GELU}} \\
    &=  2 \times n \times d_{\text{\mha{}}} \times d_{\text{\ffnn{}}}
    \end{split}
    \label{eqn:flops_ffnn_first_layer}
\end{align}

The final operation projects the output to the original hidden dimension.
\vspace{-1mm}
\begin{equation}
    \text{F\textsubscript{SecondLayer}} = 2 \times n \times d_{\text{\ffnn{}}} \times d_{\text{\ffnn{}}} - n \times d_{\text{\ffnn{}}}
    \label{eqn:flops_ffnn_second_layer}
\end{equation}

The \ffnn{}'s dimension is usually set to be higher than the \mha{}'s. For instance, in the models that we have chosen for our experiments,  $d_{\text{\ffnn{}}}=4d_{\text{\mha{}}}$. Using this fact, and the Equations~\ref{eqn:flops_ffnn_first_layer} and \ref{eqn:flops_ffnn_second_layer} the total number of \flops{} is,
\vspace{-1mm}
\begin{equation}
    \text{F\textsubscript{\ffnn{}}} = 16 \times n \times d_{\text{\mha{}}}^{2} - n \times d_{\text{\mha{}}}
    \label{eqn:total_flops_ffnn}
\end{equation}

Although Equations~\ref{eqn:total_flops_mha} and~\ref{eqn:total_flops_ffnn} show that the computational cost of \mha{} increases quadratically compared to the \ffnn{}, it uses fewer \flops{} when the maximum sequence length is $1024$. Such length is an upper bound for many language models such as \cb{}~\cite{Feng2020}, \gcb{}~\cite{Guo2021}, and \unx{}~\cite{Guo2022} that are used to solve SE tasks. We now prove this statement analytically,

\begin{lemma}\label{flops_flops_gt_flops_mha} For every conceivable input sequence fed into a Transformer model with hyperparameters same as \cb{}, \gcb{}, and \unx{}, it holds true that $$\text{F\textsubscript{\ffnn{}}} > \text{F\textsubscript{\mha{}}}.$$
\end{lemma}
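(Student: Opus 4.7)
The plan is to form the difference $D(n) := \text{F\textsubscript{\ffnn{}}} - \text{F\textsubscript{\mha{}}}$ using the closed-form expressions already derived in Equations~\ref{eqn:total_flops_mha} and~\ref{eqn:total_flops_ffnn}, and then argue that $D(n) > 0$ for every admissible sequence length $n$ given the specific hyperparameter choices of \cb{}, \gcb{}, and \unx{} (namely $d_{\text{\mha{}}}=768$, $h=12$, and $1 \le n \le 1024$).

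After subtracting term by term, the constant and $d_{\text{\mha{}}}^{2}$ contributions simplify and $D(n)$ collapses to a polynomial of degree two in $n$ of the form
\begin{equation*}
D(n) \;=\; n\Bigl[\bigl(8 d_{\text{\mha{}}}^{2} + 3 d_{\text{\mha{}}}\bigr) \;-\; \bigl(4 d_{\text{\mha{}}} + h\bigr)\,n\Bigr].
\end{equation*}
Because the leading coefficient $-(4 d_{\text{\mha{}}} + h)$ is negative, $D$ is a downward-opening parabola that vanishes at $n=0$ and at a second positive root
\begin{equation*}
n^{\star} \;=\; \frac{8 d_{\text{\mha{}}}^{2} + 3 d_{\text{\mha{}}}}{4 d_{\text{\mha{}}} + h}.
\end{equation*}
Hence $D(n) > 0$ precisely when $0 < n < n^{\star}$.

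The remaining step is a direct numerical check: substituting $d_{\text{\mha{}}}=768$ and $h=12$ gives $n^{\star} = \tfrac{768 \cdot 6147}{3084} \approx 1530.77$, which comfortably exceeds the maximum admissible length $n = 1024$ for all three models in scope. Since $1 \le n \le 1024 < n^{\star}$, we conclude $D(n) > 0$, i.e.\ $\text{F\textsubscript{\ffnn{}}} > \text{F\textsubscript{\mha{}}}$, for every input sequence the models can process.

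I do not expect a genuine obstacle here, since the work is essentially a one-variable quadratic inequality once the algebra is tidied up. The only subtlety worth flagging is that the inequality is not universal in $n$: it relies on the sequence length staying below the crossover $n^{\star}$. This is the place where the specific hyperparameters of the chosen models (and the hard cap of $1024$ tokens) must be explicitly invoked; had $n$ been permitted to grow unboundedly, the $\mathcal{O}(n^{2})$ term of \mha{} would eventually dominate the $\mathcal{O}(n)$ term of \ffnn{} and the statement would fail. I would therefore phrase the proof so that this dependence on the model size regime is made transparent, since it also foreshadows why \alpine{}'s sequence-shortening strategy is particularly effective in exactly this regime.
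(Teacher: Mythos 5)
Your proposal is correct and follows essentially the same route as the paper: both form the difference $D(n)=\text{F\textsubscript{\ffnn{}}}-\text{F\textsubscript{\mha{}}}=n(8d^{2}+3d)-n^{2}(4d+h)$, observe it is a downward-opening parabola with roots $0$ and $n^{\star}=\frac{8d^{2}+3d}{4d+h}\approx 1530$, and conclude by checking that the maximum sequence lengths ($1024$ and $512$) lie below $n^{\star}$. Your explicit remark that the inequality is regime-dependent rather than universal in $n$ is a nice touch but does not change the substance of the argument.
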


\begin{proof}
Let us fix $d:=d_{\text{\mha{}}}>0$, $h>0$ and consider the following function:

$$f(n) = \text{F\textsubscript{\ffnn{}}}(n) - \text{F\textsubscript{\mha{}}}(n) = n(8d^2 +3d) - n^2(4d+h)$$

This function
\begin{itemize}
    \item is quadratic
    \item has a maximum $(f''(n) <0)$.
    \item has two roots $n_1=0$ and $n_2=\frac{8d^2 +3d}{4d + h}$
\end{itemize}
\noindent therefore $f(n)>0$ or $\text{F\textsubscript{\ffnn{}}} > \text{F\textsubscript{\mha{}}}$ when $n\in(0,n_2)$. In other words, if the sequence length $n\leq\lfloor \frac{8d^2 +3d}{4d + h} \rfloor$, then $\text{F\textsubscript{\ffnn{}}} > \text{F\textsubscript{\mha{}}}.$

For \unx{}, \gcb{}, and \cb{}, we have that $d=768$ and $h=12$. Thus,
$$\left\lfloor \frac{8d^2 +3d}{4d + h} \right\rfloor = 1530$$
Furthermore, the maximum sequence lengths are 1024, 512, and 512, for \unx{}, \gcb{}, and \cb{} respectively. Therefore, it will consistently hold true that $n < \lfloor \frac{8d^2 +3d}{4d + h} \rfloor$, leading to the conclusion that $\text{F\textsubscript{\ffnn{}}} > \text{F\textsubscript{\mha{}}}$.

\end{proof}

\begin{wrapfigure}{r}{0.5\textwidth}
  \input{figures/theo_empr_flops_count}
\end{wrapfigure}
This is further demonstrated empirically when measuring the number of \flops{} performed by both of these layers in \cb{}~\cite{Feng2020} using the test set of the \devign{}~\cite{Zhou2019devign} dataset as shown in Figure~\ref{fig:emp_codebert}.

\noindent
Therefore, reducing the sequence length before feeding it to the \ffnn{} layer is a potential way to reduce such complexity.

\section{Approach}

In this section, we present the pruning procedure adopted in this study in each Transformer layer and elaborate on the design choices.

\begin{figure}[H]
    \centering
    \includegraphics[width=\textwidth]{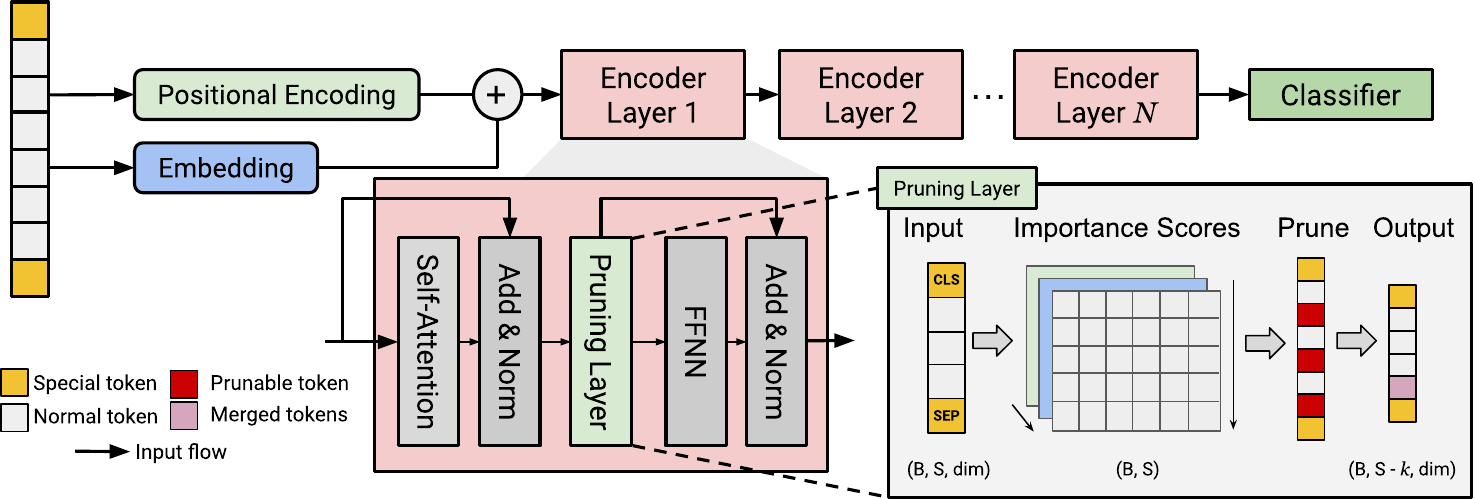}
    \caption{Overview of a Transformer encoder-based model using \alpine{}. Tokens that are highlighted in yellow represent special tokens such as \texttt{\textbf{[CLS]}} and \texttt{\textbf{[SEP]}}. Whenever applicable, we include the tensors dimensions for better clarity. In the figure, $B$ refers to the batch size, $S$ is the sequence length, and $dim$ is the hidden dimension of the model. The number of tokens that are pruned is $k$ which would differ from one layer to another.}
    \label{fig:overview}
\end{figure}
As shown in Figure~\ref{fig:overview}
and Algorithm~\ref{alg:transformer_implementation}, 
the pruning layer is inserted between the \mha{} and \ffnn{} layers. 
As highlighted in~\cref{flops_flops_gt_flops_mha}, the feed-forward layer represents a bottleneck in terms of computational costs and memory consumption~\cite{Ganesh2021} caused by
the projection of the attention layer's output to a higher dimension in the \ffnn{}.
Hence, performing pruning before this layer would reduce the effect of such a bottleneck.

Algorithm~\ref{alg:pruning_operation} presents the steps to perform the pruning.
The first step is to assign a score to each token to determine its \textit{importance} within the sequence. In this work, we use \textit{attention} based scoring~\cite{Kim2022LTP}, which quantifies this importance using the attention probabilities output by the \mha{} layer. Specifically, in Lines 4--11 we take the mean across the attention matrices of the attention heads of the \mha{} layer and then perform a column-wise mean to get a distribution of importance scores as illustrated by Equation~\ref{eq:importance_score}, where $h$ is the number of attention heads, $\mathbf{A}$ is the attention matrix of an attention head and $n$ is the list of tokens \textit{excluding} the special tokens \clstok{}, \septok{}, and \padtok{}.
\begin{equation}
    s(t_{i}) = \frac{1}{h\cdot n}\sum_{j=1}^{h}\sum_{k=1}^{n}\mathbf{A}_{jki}
\label{eq:importance_score}
\end{equation}


\begin{algorithm}[ht!]
\fontsize{8}{10}\selectfont
\caption{Perform pruning by updating the tokens mask}
\begin{algorithmic}[1]
\Require \State $A$: Attention probabilities of each attention head.
\State $M$: Initial mask of a sequence $L$.
\State $\alpha$: Width of the bounds used for pruning tokens.
\Ensure $M^{U}$: Updated update mask for prunable tokens.
\State $A \gets \texttt{mean}(A)$ \Comment{Take the mean across all attention heads}
\State $scores \gets \texttt{mean}(A)$ \Comment{Take the mean across the columns}
\State $sep\_idx \gets \texttt{sum}(M) - 1$ \Comment{Get the index of the SEP token}
\State $scores[0] \gets \mathtt{NaN}$ \Comment{Set CLS to NaN}
\State $scores[sep\_idx] \gets \mathtt{NaN}$ \Comment{Set SEP to NaN}
\State $seq\_len \gets \texttt{length}(M)$
\State $pad\_idx \gets sep\_idx + 1$
\State $scores[pad\_idx: seq\_len - 1] \gets \mathtt{NaN}$ \Comment{Set PAD to NaN}
\State $\mu \gets \texttt{meannan}(scores)$ \Comment{Calculates the mean ignoring NaNs}
\State $\sigma \gets \texttt{stdnan}(scores)$ \Comment{Calculates the SD ignoring NaNs}
\State $max \gets \mu + \alpha \cdot \sigma$
\State $min \gets \mu - \alpha \cdot \sigma$
\State $keep\_tokens \gets (scores \geq min) \wedge (scores \leq max) \wedge (\neg \texttt{isnan}(scores))$
\State $M^{U} \gets \texttt{zeros}(\texttt{shape}(M))$ \Comment{Initalize the updated mask with zeros}
\State $M^{U}[keep\_tokens] \gets 1$
\State $M^{U}[0] \gets 1$ \Comment{Set CLS to 1}
\State $M^{U}[sep\_idx] \gets 1$ \Comment{Set SEP to 1}
\State \Return $M^{U}$
\end{algorithmic}
\label{alg:pruning_operation}
\end{algorithm}
The reason we exclude the \clstok{} and \septok{} tokens from pruning is two-fold. First, 
the \clstok{} token representation is extracted from the output of the final transformer encoder layer and fed into a linear classification layer as done previously~\cite{Shi2023, Lu2021CodeXGLUE}. 
Hence, such a token should be kept throughout the encoder pipeline. Second, it has been shown that these two tokens usually receive the highest attention scores \cite{Clark2019, Kobayashi2020, Saad2023} in natural and programming languages, which would skew the importance scores distribution. As for the padding token, its attention score is 0, which again would affect the score distribution, especially for shorter sequences.

\begin{algorithm}[ht]
\fontsize{8}{10}\selectfont
\caption{Modified Transformer Encoder Layer}
\begin{algorithmic}[1]
\Require 
\State Input: Input from the previous layer.
\State $M$: Input mask.
\State $\alpha$: Width of the bounds used for pruning tokens.
\State $merge$: A boolean indicating whether pruned tokens should be merged.
\Ensure output\textsuperscript{FFNN}: Output of the FFNN layer to the subsequent layer.
\Ensure $M^{U}$: Input mask for the subsequent layer.
\State output\textsuperscript{MHA}, $A$ $\gets$ MHA(Input, $M$)
\State output\textsuperscript{MHA} $\gets$ LayerNorm(Input + output\textsuperscript{MHA})
\State \textcolor[rgb]{0,0.5,0}{$M^{U}$ $\gets$ Prune($A$, $M$, $\alpha$)}
\State \textcolor[rgb]{0,0.5,0}{output\textsuperscript{MHA}, $M^{U}$ $\gets$ RepackTensor(output\textsuperscript{MHA}, $M^{U}$, $merge$)}
\State output\textsuperscript{FFNN} $\gets$ FFNN(output\textsuperscript{MHA})
\State output\textsuperscript{FFNN} $\gets$ LayerNorm(output\textsuperscript{FFNN} + output\textsuperscript{MHA})
\State \Return output\textsuperscript{FFNN}, $M^{U}$
\end{algorithmic}
\label{alg:transformer_implementation}
\end{algorithm}

Once we obtain the importance scores distribution, we keep the tokens that are within a specific range. We set the range to be $R=[\mu - \alpha \cdot \sigma \; , \; \mu + \alpha \cdot \sigma ]$, where $\mu$ and $\sigma$ are the mean and standard deviation of the importance scores distribution, and $\alpha$ is a hyperparameter that defines the \textit{width} of the lower and upper bounds as shown in Lines 12-16. The smaller $\alpha$ gets, the tighter the window, the less tokens are kept. This hyperparameter controls the aggressiveness of pruning.

Based on a preliminary investigation, we observed that the distribution of importance tokens follows roughly a \emph{leptokurtic distribution} (\ie{} distribution with low standard deviation and a high peak) at each layer. By removing tokens outside the defined range in such distribution, we can preserve the most informative and representative data while reducing computational complexity. 

Given the importance scores and the range $R$, we create a new mask $M^{U}$ that indicates the tokens that should be pruned and those that should be kept (Lines 17-20),

\[
M^{U}[i] = 
\begin{cases}
    0 & \text{if } t_{i} = \text{\padtok{}} \text{ or } s(t_{i}) \notin R, \\
    1 & \text{otherwise.}
\end{cases}
\]
Algorithm~\ref{alg:transformer_implementation} shows the modified implementation of the Transformer layer. The highlighted Lines 7 and 8 represent the modifications that we have introduced. This also demonstrates the ease of integrating \alpine{} with minimal effort.

Using the updated mask from the previous step, we either remove the rows from the \mha{} output matrix that correspond to the tokens to be pruned
as shown in Line 8, or merge them into one row.
By default, we do the merging to minimize the information loss incurred by pruning, especially that of tokens that are above the upper bound of the importance interval. Next, the matrix's dimensions are reduced. We also calculate the final mask since this operation is performed on a batch of sequences to guarantee that all sequences are of equal length.

We experiment with three variants of pruning. The first one involves performing pruning across \emph{all layers} of the encoder. In the second setting, it is performed only for \emph{even-indexed layers} (\ie{} $0, 2, \dots{} 10$). Whereas the final one involves pruning at \emph{odd-indexed layers} (\ie{} $1, 3, \dots{} 11$).
\section{Experiments}

The goal of this study is to investigate the effect of \alpine{} on language models of code when applied to SE tasks. The research aims to assess the balance between efficiency, from computational and environmental points of view, and model accuracy.
Toward this goal, we formulate the following research questions. The rest of the section elaborates on the experiments designed to answer the research questions.

\smallskip
\begin{itemize}
    \item [\textbf{RQ1.}] 
    \textit{To what extent does pruning on language models for code save computational resources?}\\
    Intuitively, the less data a model processes, the more computation is saved. With this question, we wish to confirm and quantify these computational savings compared to the baseline, non-pruned model.
    \item [\textbf{RQ2.}]
    \textit{What is the impact of the pruning technique on the performance of language models for code on various SE tasks?}\\
    Naturally, pruning tokens would result in some information loss. In this research question, we aim to measure the extent of performance drop, in terms of accuracy, for example, that might occur, if any.
    \item [\textbf{RQ3.}] 
    \textit{What is the effect of merging prunable tokens as opposed to completely dropping them?}\\
    The proposed approach allows one to choose whether pruned tokens are to be partially kept by merging their representation into a single row or entirely removed from the sequence. In this research question, we investigate the impact of such a design choice.
    \item [\textbf{RQ4.}] 
    \textit{What is the impact of the computing environment on the efficiency of the pruning method?}\\
    Finally, an important reason behind model simplification is to allow computationally demanding models to run on relatively lesser-capable \gpu{}s (\eg{} consumer-grade \gpu{}s) compared to those used in high-performance computing clusters (\eg{} NVIDIA V$100$ or A$100$). This exploration would ensure wider accessibility and make using these models more practical.
\end{itemize}

\subsection{Software Engineering Tasks}
In this study, we choose two substantially different tasks from the {\sc CodeXGLUE} benchmark~\cite{Lu2021CodeXGLUE} for our experiments.
Both tasks aim to improve code quality, developer productivity, software reliability, and maintainability.

\smallskip
\noindent
\textbf{Code clone detection:} Clone detection identifies similar or identical code fragments, helping maintain code quality, reduce bugs, and enable refactoring. In this work, we used the filtered version of the BigCloneBenchmark dataset~\cite{Svajlenko2014} released by Wang \etal{}~\cite{Wang2020CodeClone}. 
This binary classification problem aims to predict whether two Java code snippets are clones. The dataset is composed of $901K$ training pairs and $415K$ pairs for testing. 
We randomly take a subsample of $90K$ pairs for training and $4K$ for evaluation;
such random sampling is quite common in this domain~\cite{Shi2023, Yang2022}. We use the same subsamples for all experiments during training and evaluation.

\noindent
\textbf{Defect prediction:} It identifies bugs, vulnerabilities, and defects early, reducing the cost and effort of manual code reviews and testing. We used Zhou \etal{}~\cite{Zhou2019devign}'s dataset that was curated from a set of four large-scale open-source C repositories. The authors manually annotated the code snippets and covered multiple types of defects such as resource leaks and use-after-free vulnerabilities. 
We use the default pre-split sets of $21,854$ samples for training and $2,732$ for validation and testing.


\subsection{Model selection}
We select the following models given that they represent state-of-the-art encoder-only models in solving the previously mentioned tasks as of conducting this work~\cite{CodeXGLUELB, Shi2024, Niu2023, Zeng2022}.

\smallskip
\noindent
\textbf{CodeBERT \cite{Feng2020}:} It is a pre-trained language model that has the same architecture and a similar pretraining strategy as {\sc RoBERTa}~\cite{Lui2019RoBERTa}.
It comprises twelve Transformer layers,
each containing a 12-headed \mha{} layer and has a hidden dimension of $768$. It was trained on bimodal data consisting of pairs of natural language and programming language across six programming languages from the CodeSearchNet dataset~\cite{Husain2019}.

\noindent
\textbf{GraphCodeBERT \cite{Guo2021}:} \gcb{} extends \cb{} by including the data flow information within the source code input. It also includes an additional pre-training task where the objective is to predict whether a data flow edge exists between two nodes.

\noindent
\textbf{UniXCoder~\cite{Guo2022}:} 
\unx{} leverages a multi-modal input consisting of natural language, source code, and the flattened sequence of the abstract syntax tree of the code snippet. It is similar to the two other models that share the same internal architecture. The difference here (aside from the pre-training objectives and input representation during pre-training) is that \unx{} allows for a larger context length, $1024$ compared to $512$ for (Graph)CodeBERT.

We use the same model architecture in both tasks. Specifically, we add a classifier in the form of a dense layer on top of each encoder. It takes as input the last hidden representation of the \clstok{} token.

\subsection{Evaluation Metrics}
To assess the predictive performance of the models on the aforementioned tasks, we use the same evaluation metrics that were reported by Lu \etal{}~\cite{Lu2021CodeXGLUE} in the CodeXGLUE benchmark. Specifically, we calculate the \textit{accuracy} for the Devign dataset and \textit{F1}-score for the BigCloneBenchmark.
As for the computational efficiency, we report the number of floating points operations (\flops{}) and \textit{Throughput}. As stated in Section~\ref{subsec:motivation} the number \flops{} quantifies the computational complexity. 
A model with a higher {\sc flop} count requires more computing resources. 
The goal is to reduce such a count while maintaining the predictive performance as much as possible. 
The throughput refers to the number of input samples a model can process in a second. This metric is especially relevant during inference and model deployment.

\subsection{Experimental Setting}
We ran the experiments on two machines. 
The first machine has an {\sc amd} Milan 7413 {\sc cpu}, $64${\sc gb} of {\sc ram} and an NVIDIA A$100$ with $40${\sc gb} of v{\sc ram}. The second machine has an Intel i$7-8700$ {\sc cpu}, $32${\sc gb} of {\sc ram}, and an NVIDIA RTX$2080$ \gpu{} with $8${\sc gb} of v{\sc ram}. We use the same hyperparameters set in the {\sc CodeXGLUE} benchmark for each task.

\begin{table}[h]
\centering
\caption{Comparison of the pruning methods for the two SE tasks. The highest \textit{accuracy} or F1-score 
is \textbf{\underline{underlined and bolded}}. The pruning method with the highest accuracy is highlighted with a $\bigstar$. In the \textit{FLOPs} column, the values between parenthesis indicate the speedup ratio achieved by the pruning setting compared to the non-pruned model. The unit of throughput (TP) is \textit{samples per second} and the reported values were measured on an NVIDIA RTX$2080$. All metrics are calculated using the evaluation sets.}
\begin{subtable}[H]{\textwidth}

\resizebox{\textwidth}{!}{
\begin{tabular}{lccclcclcc}
\toprule
\multicolumn{4}{c}{\cb{}} & \multicolumn{3}{c}{\gcb{}} & \multicolumn{3}{c}{\unx{}} \\
\cmidrule(lr){1-4} \cmidrule(lr){5-7} \cmidrule(lr){8-10}
\makecell[l]{Pruning Method} & Accuracy & FLOPs ($\times 10^9$) & TP & Accuracy & FLOPs ($\times 10^9$) & TP & Accuracy & FLOPs ($\times 10^9$) & TP \\
\midrule
Baseline (No Pruning) & $64.02$\% & $48.29$ & $69.34$& $\mathbf{\underline{64.7}}$\% & $48.29$ & $69.19$ & $\mathbf{\underline{66.54}}$\% & $48.29$ & $69.37$ \\  \hdashline \addlinespace
All Layers & $62.3$\% & $20.97 \, (\times 2.3)$ & $98.03$ & $61.49$\% & $23.84\,(\times2.02)$ & $91.29$ & $65.44$\% & $27.45\,(\times1.75)$ & $81.61$ \\
\makecell[l]{Even Indexed \\ \hspace{1mm}$L = 0,2,\dots{}10$} & $64.09$\% & $33.25\,(\times1.44)$ & $81.75$ & $64.02$\%\textsuperscript{$\bigstar$} & $34.51\,(\times1.39)$ & $80.07$ & $65.62$\%\textsuperscript{$\bigstar$} & $37.003\,(\times1.3)$ & $74.73$ \\
\makecell[l]{Odd Indexed \\ \hspace{1mm}$L = 1,3,\dots{}11$} & $\mathbf{\underline{64.68}}$\%\textsuperscript{$\bigstar$} & $36.59\,(\times1.31)$ & $74.22$  & $63.79$\% & $36.66\,(\times1.31)$ & $74.93$ & $65.59$\% & $39.01\,(\times1.23)$ & $72.03$ \\
\bottomrule
\end{tabular}
}
\caption{Defect Prediction (\devign{})}
\end{subtable}

\begin{subtable}[h]{\textwidth}
\resizebox{\textwidth}{!}{
\begin{tabular}{lccclcclcc}
\toprule
\multicolumn{4}{c}{\cb} & \multicolumn{3}{c}{\gcb} & \multicolumn{3}{c}{\unx} \\
\cmidrule(lr){1-4} \cmidrule(lr){5-7} \cmidrule(lr){8-10}
\makecell[l]{Pruning Method} & F1 & FLOPS ($\times 10^9$) & TP & F1 & FLOPS ($\times 10^9$) & TP & F1 & FLOPS ($\times 10^9$) & TP \\
\midrule
Baseline (No Pruning) & $\mathbf{\underline{93.09}}$\% & $96.6$ & $34.12$ & $91.98$\% & $96.5$ & $33.85$  & $95.07$\% & $96.5$ & $34.22$ \\  \hdashline \addlinespace
All Layers & $90.57$\% & $48.2\, (\times 2)$ & $51.71$  & $90.74$\% & $47.5\, (\times 2.03)$ & $51.54$ & $94.7$\% & $57.7\, (\times 1.67)$ & $47.21$ \\
\makecell[l]{Even Indexed \\ \hspace{1mm}$L = 0,2,\dots{}10$}  & $90.39$\% & $65.7\, (\times 1.47)$ & $44.88$ & $\mathbf{\underline{92.13}}$\%\textsuperscript{$\bigstar$} & $68.5\, (\times 1.40)$ & $43.10$  & $94.5$\% & $80.4\, (\times 1.2)$ & $36.32$ \\
\makecell[l]{Odd Indexed \\ \hspace{1mm}$L = 1,3,\dots{}11$} & $92.17$\%\textsuperscript{$\bigstar$} & $68.5\, (\times 1.41)$ & $42.23$  & $91.46$\% & $73.9\, (\times 1.30)$ & $39.05$  & $\mathbf{\underline{95.47}}$\%\textsuperscript{$\bigstar$} & $81.5\, (\times 1.18)$ & $35.87$ \\
\bottomrule
\end{tabular}
}
\caption{Code Clone Detection (\bcb{})}
\end{subtable}
\label{tab:results_pruning}
\end{table}
\begin{figure*}[ht]
  \centering
  \begin{subfigure}{\columnwidth}
    \centering
    \includegraphics[width=0.3\textwidth]{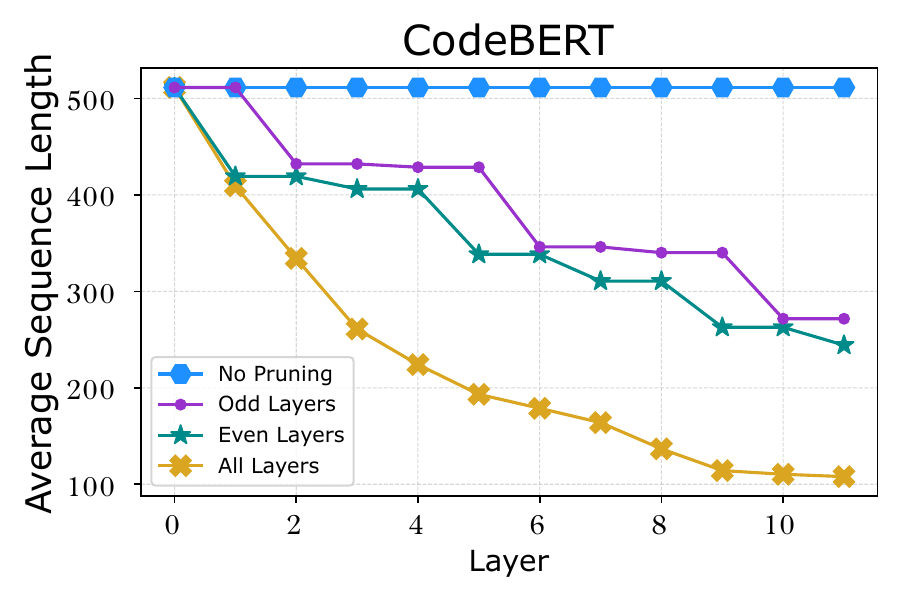}
    \includegraphics[width=0.3\textwidth]{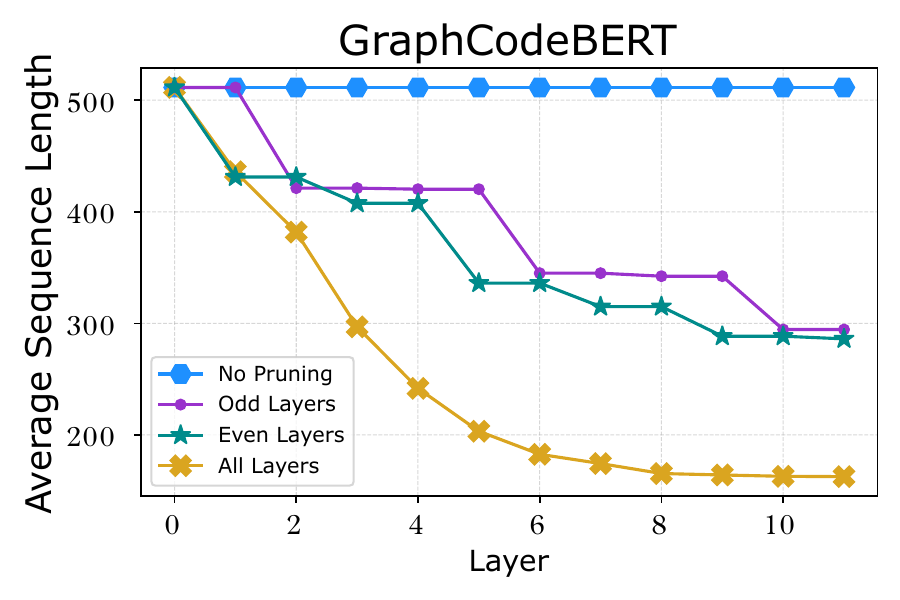}
    \includegraphics[width=0.3\textwidth]{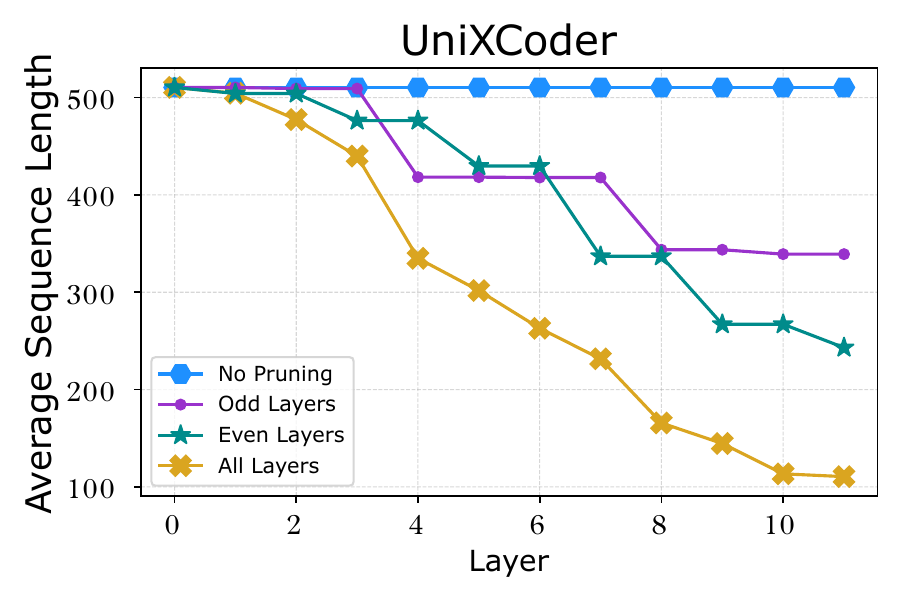}
    \vspace{-2mm}
    \caption{Defect Prediction (\devign{})}
  \end{subfigure}
  \vspace{-2mm}
  \begin{subfigure}{\columnwidth}
    \centering
    \includegraphics[width=0.3\textwidth]{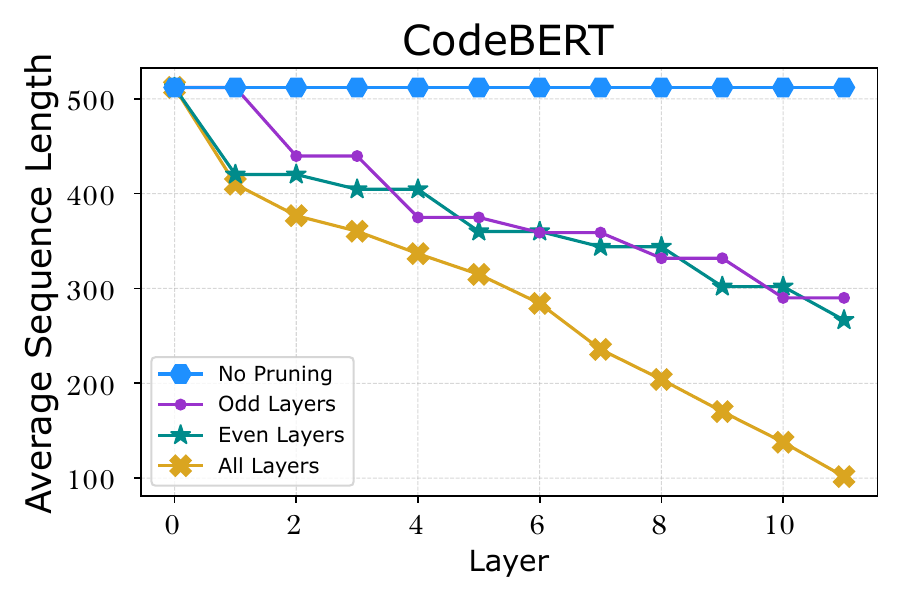}
    \includegraphics[width=0.3\textwidth]{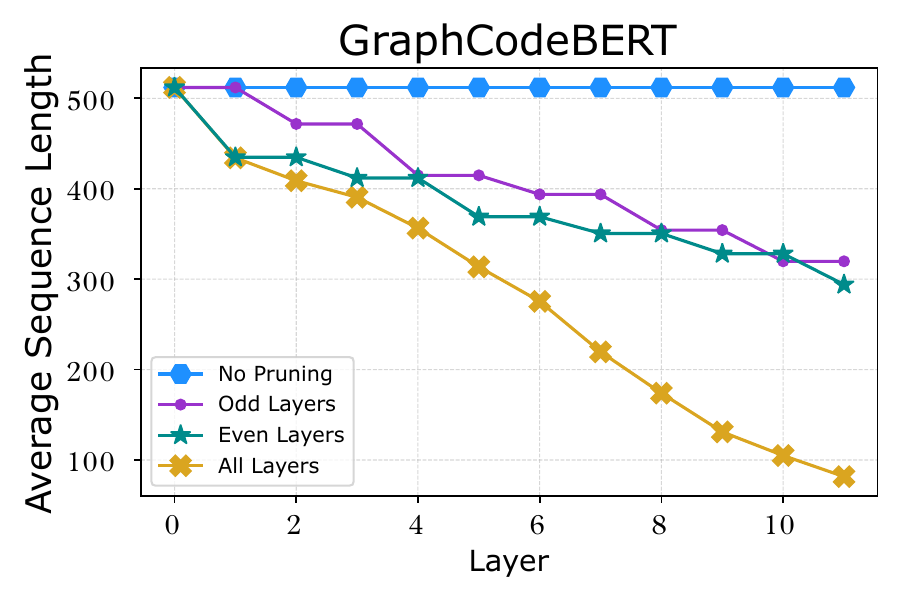}
    \includegraphics[width=0.3\textwidth]{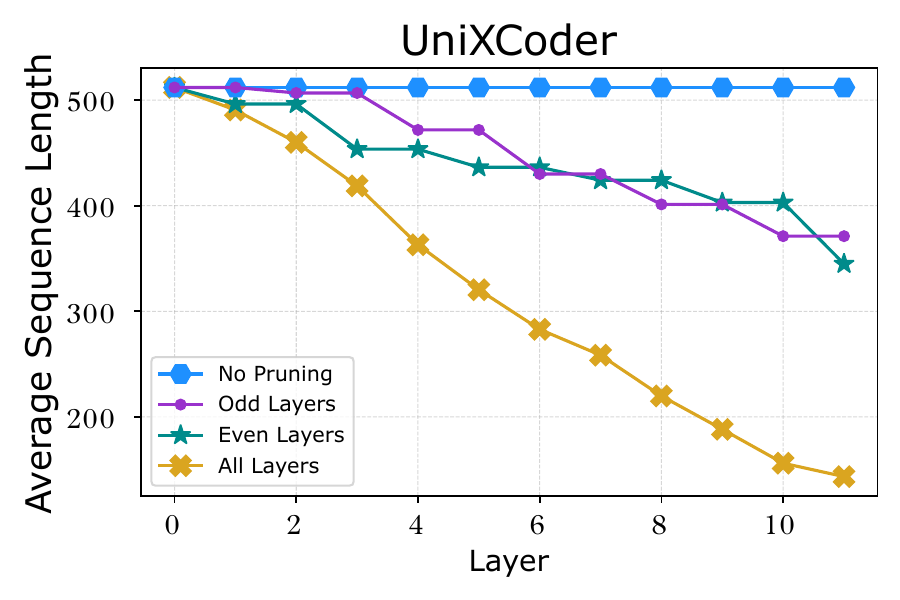}
    \vspace{-2mm}
    \caption{Code Clone Detection (\bcb{})}
  \end{subfigure}
  \caption{The progressive average reduction in sequences' lengths as the input traverses through the layers of each model. The plots are the result of 
  a forward pass across the whole evaluation set of each dataset with a batch size of 8. 
  }
  \label{fig:seq_len_progression}
\end{figure*}
\section{Results}


\subsection{\textbf{RQ1: Computational Efficiency}}
\label{results:rq1}
\noindent
\textit{\underline{Impact on FLOPs Count}}: Table~\ref{tab:results_pruning} highlights the computational speedup offered by \alpine{} across the three models on the SE tasks. 
Performing pruning across all layers allows for the most computational savings. 
On average\footnote{Given the values are ratios, we report the geometric mean.}, it reduces the \flops{}\footnote{Calculated using \href{https://github.com/facebookresearch/fvcore}{\texttt{fvcore}}.} count by $\times 2.01$ on the \devign{} dataset and $\times 1.9$ and \bcb{} compared the baseline models (\ie{} without any pruning). 
Specifically, \cb{} and \gcb{} exhibit roughly the same gain, where these models consume roughly half \flops{}.
\unx{} also show significant gain, though slightly smaller ($\times 1.67$ and $\times 1.75$), in terms of \flops{}. 

Interestingly, performing pruning at even indexed layers seems to yield lower operations compared to pruning at odd indexed layers. A possible explanation for this behavior is that there is a considerable drop in the number of tokens at the 0\textsuperscript{th} layer which would have an earlier cascading effect if we were to perform this at a later layer.

In Figure~\ref{fig:seq_len_progression}, we plot the average sequence length across all the twelve layers during inference. When pruning across all layers on the \devign{} dataset, the sequence length plot exhibits an exponential-like decay on \cb{} and \gcb{}. On the other hand, this reduction follows roughly a linear trend on \bcb{} across all models. What is interesting is that at the final layer, the sequences are quite compressed. Specifically, the highest average sequence length at the last layer ($162$) is obtained when using \gcb{} on the defect prediction task which is 
$\sim\times 3$ less than the \textit{none} pruned average length of $512$.
In addition, aligned with the results of Table~\ref{tab:results_pruning}, we can see that pruning at even-indexed layers yields shorter sequences than odd-indexed ones. Across all task-model combinations and all layers (except for $2$-$4$ layer), \alpine{} produces more compressed sequences when applied at these layers compared to their odd-indexed counterpart. 

\smallskip
\noindent
\textit{\underline{Impact on Throughput}}: Pruning across all layers consistently yields the highest throughput improvement for all models and tasks. For instance, \cb{}'s throughput increases from $69.34$ to $98.03$ ($41.4$\% improvement) for the defect prediction task and from $34.12$ to $51.71$ ($51.6$\% improvement) for code clone detection.
Moreover, applying pruning in even-indexed layers generally results in better throughput compared to pruning odd-indexed layers. This trend is evident across all models and tasks, with even-indexed layer pruning providing an average throughput improvement of $16.7$\% compared to no pruning, while odd-indexed layer pruning offers an average improvement of $9.3$\%.

Another observation is that the extent of improvement varies among the models. \gcb{} exhibits the highest average throughput improvement of $30.8$\% across all pruning methods and tasks, followed by \cb{} with $28.5$\% and \unx{} with $20.1$\%.
In general, the models achieved higher throughput on the \devign{} dataset, regardless of the pruning method applied. The average throughput for this task is $79.4$ samples per second, while for the code clone detection task, it is $41.5$ samples per second.

The increase in throughput is attributed to \alpine{}'s ability to compress sequences. A smaller sequence length allows for more efficient parallel processing and reduces the computational overhead associated with longer sequences, resulting in higher throughput and faster processing times.

\smallskip
\noindent
\textit{\underline{Impact on Memory Footprint}}: To further evaluate \alpine{}'s impact from a computational efficiency standpoint, we report the \gpu{} memory consumption. Due to space constraints, we only report the figures regarding pruning across all layers.

\begin{wrapfigure}{r}{0.55\textwidth}

    \raggedright
    \includegraphics[width=0.55\textwidth]{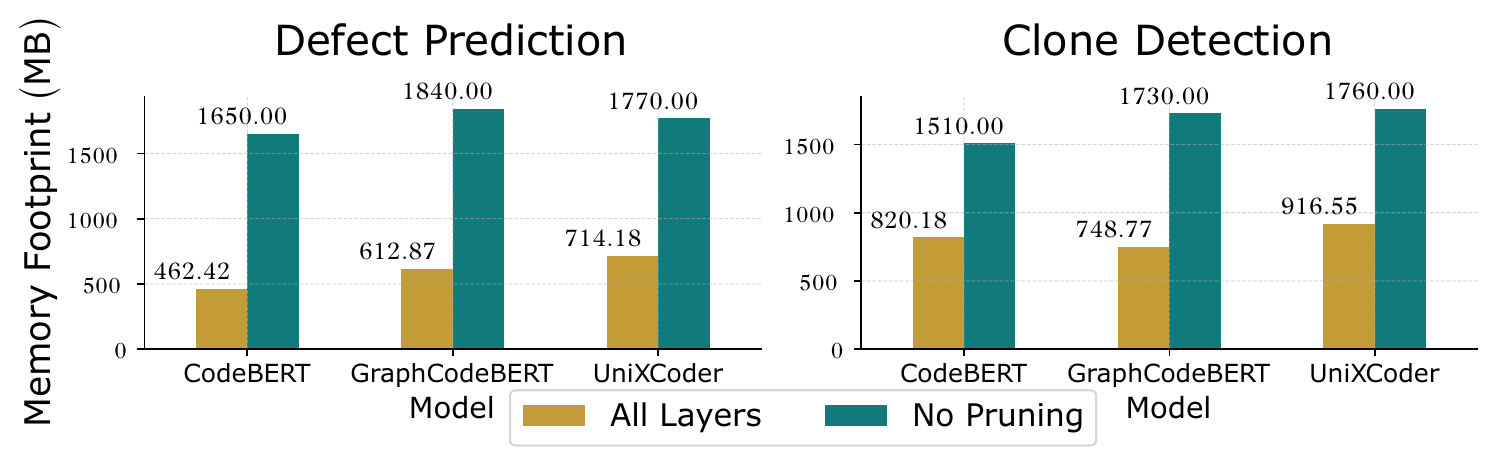}
    \caption{Comparison of the \gpu{} memory footprint between pruned and non-pruned models across the tasks. The measurements were conducted on the evaluation sets of each dataset during inference.}
    \label{fig:memory_footprint}

\end{wrapfigure}

From the plot in Figure~\ref{fig:memory_footprint}, we see that \alpine{} significantly reduces memory consumption across all models and tasks. \cb{} exhibits the highest percentage of memory reduction for the defect detection task ($72$\%), while \gcb{} and \unx{} show more consistent reductions across both tasks ($56.7$\% to $66.7$\%).

This reduction leads to improved cost efficiency, faster data transfer, and power efficiency. By requiring less memory, models can be deployed on more affordable \gpu{}s with smaller memory capacities, reducing hardware costs. Additionally, a smaller memory footprint results in faster data transfer between the \cpu{} and \gpu{}; the additional speed, in turn, increases throughput as demonstrated earlier. Moreover, 
minimizing the model's memory footprint contributes to reduced power consumption,  which is crucial in constrained environments.

\vspace{-2mm}
\smrybx{\textbf{Summary}: \alpine{} achieves a high compression rate of sequences under different modes. As a result, it allows language models of code to be more computationally efficient. This is demonstrated by the $\times 2$ reduction in \flops{} count, up to $51.6$\% increase in model throughput, and up to $58.1$\% decrease in memory consumption.}

\subsection{\textbf{RQ2: Impact on Performance}}

In Table~\ref{tab:results_pruning}, we report the predictive performance of the pruned models. 
On the \devign{} dataset, performing pruning at odd-indexed layers leads to a slight improvement when using \cb{}, with a $0.66$\% increase in accuracy compared to that of the non-pruned version. On the other hand, adding the pruning layer at even layers is more efficient for \gcb{} and \unx{}, where they were able to maintain $98.94$\% and $98.61$\% of the original accuracy scores, respectively.
In the clone detection task, pruning at even-indexed layers achieves the highest F1-scores 
compared to other pruning strategies. It even managed to improve the clone detection performance by increasing the F1-scores of \gcb{} and \unx{} by $0.15$\% and $0.4$\%.

Applying \alpine{} at all layers
not only leads to $\times 2$ less computation 
but also results in a high level of predictive performance for all three models. \unx{} exhibits the highest performance both before and after pruning, retaining $98.3$\% of its original accuracy. \cb{} and \gcb{} also demonstrate strong performance retention, preserving $97.3$\% and $94.4$\% of their original accuracy, respectively. A similar observation can also be made for the results of code clone detection. All models maintain a high level of performance after pruning, with \unx{} achieving the highest F1-score retention at $99.6$\% of its original score. \gcb{} and \cb{} also demonstrate strong performance, retaining $98.7$\% and $97.3$\% of their original F1-scores, respectively.

Among the three models, \unx{} shows the highest resilience to pruning, keeping an average of $99$\% of its original performance across both tasks and all pruning methods. 
\smrybx{\textbf{Summary}: \alpine{} maintains a high percentage of the original performance across all models and tasks, with some instances even surpassing the performance of the non-pruned models. On average, the pruned models retain $98.15$\% of the performance of their non-pruned counterparts. The effectiveness of the actual pruning strategy, depends on the model and most importantly on the task. On \devign{}, integrating \alpine{} in even-index layers has led to the highest performance retention, whereas on \bcb{} such an outcome was achieved by odd-indexed layers.}

\subsection{\textbf{RQ3: Effect of Merging Prunable Tokens on Model Accuracy}}
\begin{table}[ht!]
\centering
\caption{The effect of token merging on models' accuracy and F1-scores. \textit{w/ m} refers to the setting where token merging is enabled whereas \textit{w/o m} indicates pruning without merging. As a reminder the performance metric for defect prediction is accuracy and F1 for code clone detection.}
\resizebox{\columnwidth}{!}{
\begin{tabular}{cc|c|c|c|c|c|c|c|c|c|c|c}
\toprule
\multirow{5}{*}{Model} & \multicolumn{12}{c}{Task} \\
\cmidrule(lr){2-13}
& \multicolumn{6}{c|}{Defect Prediction} & \multicolumn{6}{|c}{Code Clone Detection} \\
\cmidrule(lr){2-13}
& \multicolumn{2}{c|}{All} & \multicolumn{2}{c|}{Even} & \multicolumn{2}{c|}{Odd} & \multicolumn{2}{c|}{All} & \multicolumn{2}{c|}{Even} & \multicolumn{2}{c}{Odd} \\
\cmidrule(lr){2-13}
& w/ m & w/o m & w/ m & w/o m & w/ m & w/o m & w/ m & w/o m & w/ m & w/o m & w/ m & w/o m \\
\hline \addlinespace
\makecell[l]{\cb{}} & $\mathbf{62.3}$\% & $61.8$\% & $\mathbf{64.09}$\% & $63.7$\% & $\mathbf{64.68}$\% & $64.09$\% & $\mathbf{90.57}$\% & $89.63$\% & $\mathbf{90.39}$\% & $90.30$\% & $\mathbf{92.17}$\% & $91.06$\% \\ \addlinespace
\makecell[l]{\gcb{}} & $\mathbf{61.49}$\% & $61.05$\% & $\mathbf{64.02}$\% & $62.2$\% & $\mathbf{63.79}$\% & $62.66$\% & $\mathbf{90.74}$\% & $90.71$\% & $\mathbf{92.13}$\% & $91.63$\% & $\mathbf{91.46}$\% & $91.06$\% \\ \addlinespace
\makecell[l]{\unx{}} & $\mathbf{65.44}$\% & $64.20$\% & $\mathbf{65.62}$\% & $64.49$\% & $\mathbf{65.59}$\% & $65.52$\% & $\mathbf{94.7}$\% & $93.64$\% & $\mathbf{94.5}$\% & $93.62$\% & $\mathbf{95.47}$\% & $94.4$\% \\
\bottomrule
\end{tabular}
}
\label{tab:tok_merging}
\end{table}
During pruning, the representation of tokens marked to be pruned can be entirely removed or merged into one vector. To study the impact of token merging, we conduct experiments to measure its impact on the two tasks and report the results in Table~\ref{tab:tok_merging}.

Across all models and datasets, merging tokens consistently yields better performance compared to discarding them completely. \unx{} benefits the most from this design choice, with an average performance improvement of $1.14$\% on \devign{} and $1$\% on \bcb{}. 

On the \devign{} dataset, \gcb{} shows the highest performance gain when pruning even layers ($1.82$\%), while \cb{}'s performance is enhanced the most when pruning odd layers on the BCB dataset ($1.11$\%). In contrast, \gcb{} exhibits the least improvement when pruning all layers on the \bcb{} dataset ($0.03$\%).

\smrybx{\textbf{Summary}: 
Merging prunable tokens into a single vector consistently outperforms compared to completely discarding them across all models and datasets, with an average performance improvement of $0.77$\%.
However, the extent of such gain depends on the language model and the task.}

\subsection{\textbf{RQ4: Impact of Computing Environment}}
In Figures~\ref{fig:a100_speed} and~\ref{fig:rtx2080_speed}, we plot the time taken to fine-tune the language models on the selected tasks when using different \gpu{}s.
First, examining the results for the A$100$ \gpu{}, we observe that \alpine{} leads to a reduction in fine-tuning time for all three models on both the tasks. For defect prediction, pruning results in a running time reduction of around $25$\% for \cb{}, $24$\% for \gcb{}, and 15\% for \unx{}. Similarly, for the clone detection task, pruning reduces the running time by approximately $18$\% for \cb{}, $11$\% for \gcb{}, and $12$\% for \unx{}.  These findings demonstrate that pruning is effective in improving the efficiency of the models on a high-performance \gpu{}.

\begin{wrapfigure}{r}{0.55\columnwidth}
\raggedright
    \includegraphics[width=0.55\columnwidth]{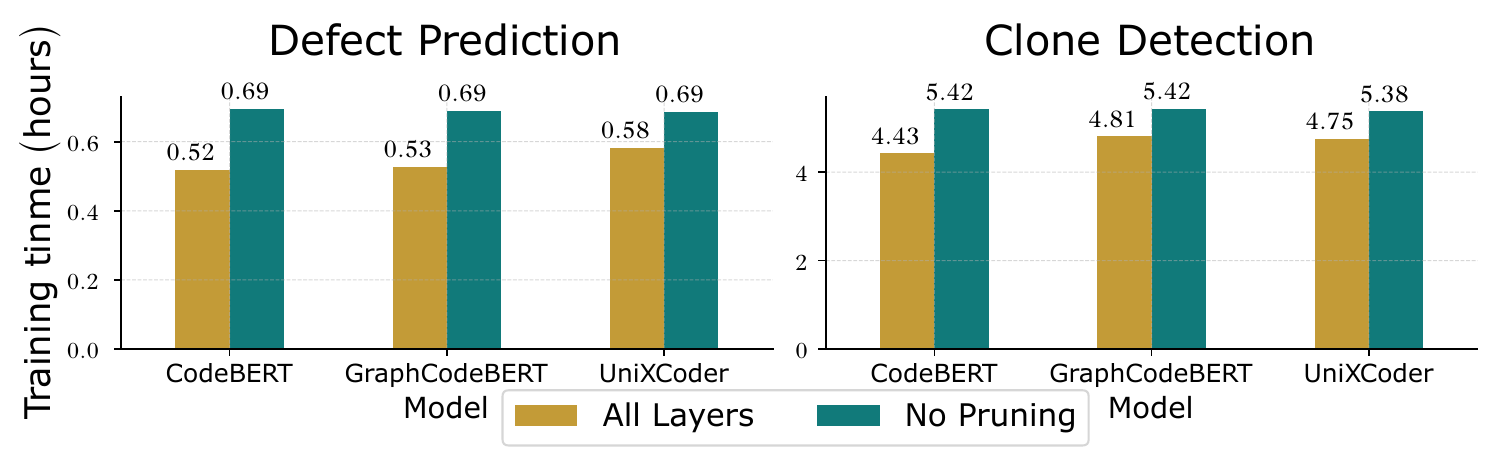}
    \caption{Fine-tuning time before and after pruning on an NVIDIA A100 \gpu{}.}
    \label{fig:a100_speed}
\raggedright
    \includegraphics[width=0.55\columnwidth]{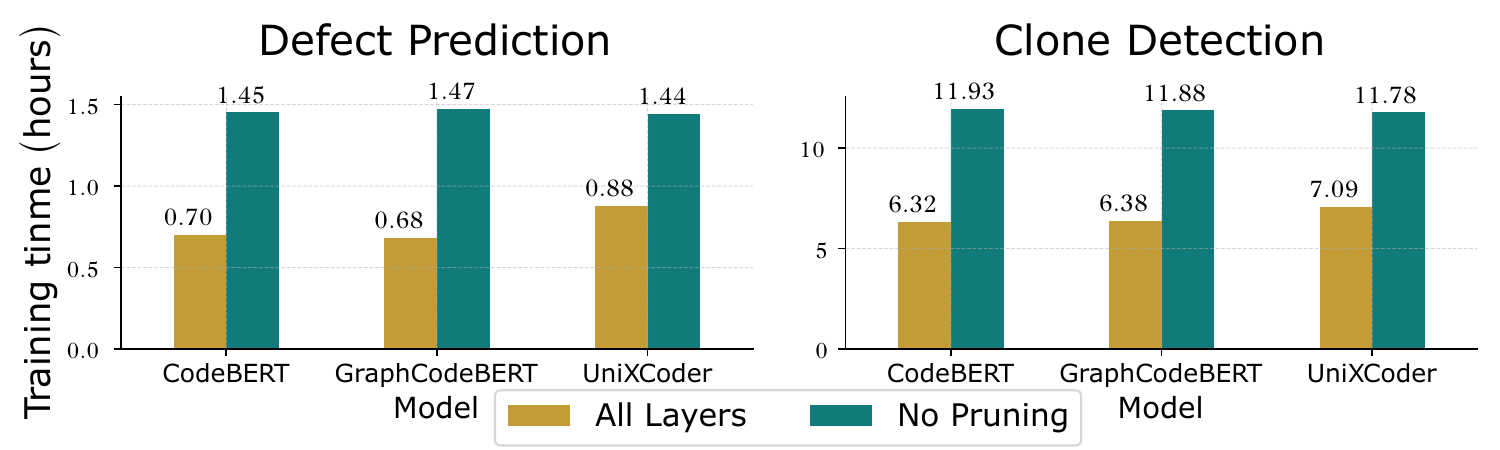}
    \caption{Fine-tuning time before and after pruning on an NVIDIA RTX2080 \gpu{}.}
    \label{fig:rtx2080_speed}
\end{wrapfigure}
Next, comparing the results between the RTX$2080$ and A$100$ \gpu{}s, we notice that the running times on the A$100$ are generally shorter than those on the RTX$2080$, both with and without pruning. This is expected due to the superior computational capabilities of the A$100$. However, the relative impact of pruning on efficiency differs between the two \gpu{}s. On the RTX$2080$, pruning leads to a more significant reduction in running time, with improvements ranging from $39$\% to $54$\% across models and tasks. In contrast, on the A$100$, the reduction in running time due to pruning is less pronounced, ranging from $11$\% to $25$\%.
\begin{wraptable}{l}{.5\columnwidth}
  \centering
  \caption{\gpu{} emission statistics across two NVIDIA \gpu{}s.}
  \resizebox{.5\columnwidth}{!}{
  \begin{tabular}{lccc}
    \toprule
    \gpu{} & \makecell[c]{E(CO\textsubscript{2} kg) \\ without \alpine{}} & \makecell[c]{E(CO\textsubscript{2} kg) \\ with \alpine{}} & \makecell[c]{Emission \\ reduction rate} \\
    \midrule
    RTX$2080$  & $5.93$ & $0.34$ & $44.85$\% \\
    A$100$      & $3.15$ & $2.69$ & $14.60$\% \\
    \bottomrule
  \end{tabular}
  }
\label{tab:co2_emissions}
\end{wraptable}

Building upon these results, we further investigate the implications of the reduced time from an environmental perspective.

In Table~\ref{tab:co2_emissions}, we report the CO\textsubscript{2} emission rates for both \gpu{}s with and without \alpine{}. The measurements were taken using the tool provided by Lacoste \etal{}~\cite{Lacoste2019Quantifying}.

The results show that \alpine{} significantly reduces the CO\textsubscript{2} emissions of both the RTX$2080$ and A$100$ \gpu{}s. For the RTX$2080$, the amount of CO\textsubscript{2} emitted decreases from $5.93$ kg to $0.34$ kg with \alpine{}, resulting in a substantial emission reduction rate of $44.85$\%. Similarly, for the A$100$, emissions decrease from $3.15$ kg to $2.69$ kg, resulting in a reduction rate of $14.60$\%. The reduction in 
emissions is particularly significant for the RTX$2080$, which aligns with the previous analysis showing that pruning leads to a more substantial improvement in efficiency on consumer-grade \gpu{}s compared to high-performance \gpu{}s like the A$100$.

\smrybx{\textbf{Summary}: 
We observe significant improvements in the efficiency of models thanks to pruning, especially on consumer-grade \gpu{}s such as the RTX$2080$ compared to high-performance \gpu{}s like the A$100$,
highlighting the effectiveness of such method in enabling the use of these models on less powerful hardware.
Furthermore, the significant emission reduction rate observed on the RTX$2080$ underscores the potential of \alpine{} to enable the sustainable adoption of language models on a wider range of hardware, including consumer-grade \gpu{}s. 
}
\section{Discussion}
\subsection{Sequence Compression}
In Section~\ref{results:rq1} we demonstrated how \alpine{} can significantly reduce sequences' lengths to $\sim\times3$ less than the original lengths in some scenarios.
In code analysis tasks, not all parts of the input sequence contribute equally to the final prediction. Many tokens in source code, such as common language keywords, special characters, or even standard library imports, may not be critical for identifying vulnerabilities or detecting code clones. However, as previously stated in Section~\ref{results:rq1}, the reduction rate depends on the nature of the task. 

Vulnerability detection, which is the focus of the \devign{} dataset, usually involves identifying specific code patterns that may lead to security issues. These vulnerability indicators often compromise a small portion of overall function, allowing for rapid pruning of irrelevant code segments. The exponential-like decay observed in Figure~\ref{fig:seq_len_progression} suggests that the models may quickly isolate potentially vulnerable code portions, discarding a large amount of non-critical code in earlier layers.

On the other hand, in code clone detection, the task associated with \bcb{}, requires a more holistic view of code structure and functionality. Clones can manifest in various forms, from exact verbatim copies (\ie{} Type-1 clones) to functionality-similar but syntactically different implementations, also known as Type-4 clones. This entails the need to retain a larger set of code features through the layers, which explains the more gradual, linear-like reduction in sequence length. The models likely need to maintain a wider context to effectively compare and identify similarities across different code snippets.
\subsection{Less is More}
Our findings on sequence reduction with \alpine{} caters to the calls of developing smaller, task-oriented models~\cite{Menzies2023}. Just as compact models demonstrate that not all parameters in large, general-purpose models are necessary for specific tasks, our results show that not all tokens in an input sequence are crucial for effective code analysis tasks. In both cases, the underlying principle is the same: \textit{task-specific efficiency}. (Large) language models, while powerful, often contain redundant or task-irrelevant information. Similarly, full code sequences include many tokens that may not contribute significantly to solving SE tasks. This is also similar to previous research. For instance, Xu \etal{}~\cite{Xu2023} have observed that removing noisy instances when training models for obsolete comment detection can lead to up to $10.7$\% improvement in accuracy, with $15$\% less data.
The findings of this work align with the principle of \textit{less is more}. We can achieve efficient, high-performing systems not by processing more data, but by intelligently selecting the most pertinent information and avoiding noisy data for the task at hand.

\subsection{Extension of \alpine{} to other Transformer Variants}
\label{discussion:extention}
The main focus of \alpine{} was encoder-only models given their ability to generate embeddings that are primarily suitable for identification tasks. They are also an important component in text generation systems that use information retrieval, or Retrieval Augmented Generation (RAG). However, it is important to remark that \alpine{} can be extended to other variants of the Transformer architecture, \ie{} decoder-only and encoder-decoder models. In decoder-only models, \alpine{} would be placed after the Masked Self-attention layer ({\sc MaskedSA}), not after the self-attention layer because the {\sc MaskedSA} attends only to previous tokens, preserving the autoregressive nature of the decoder. In addition, the last token would be kept, as currently done, so that it can be used by LM head. Finally, in the encoder-decoder setting, \alpine{} can be either applied to the encoder, decoder, or both. The evaluation of \alpine{}'s performance on these architectures is deferred to future work. In return, this would increase their adoption given the less computational resources they demand, especially on consumer-grade hardware.
\section{Threats to Validity}
\noindent
\textit{\textbf{Internal validity}}:
Internal threats to validity are concerned with the ability to draw conclusions from our experimental results.
The hyperparameters specified for a model influence the time, resources required, and the results.
To mitigate a potential threat, we used the same hyperparameters across all models, as reported in previous works~\cite{Lu2021CodeXGLUE}, and utilized the same data splits for fine-tuning and testing.
This consistency in the experimental setup strengthens the internal validity of our findings, ensuring that the observed effects can be attributed to the \alpine{} pruning method rather than variations in hyperparameters or data splits.
\\
\textit{\textbf{External validity}}:
Such threats are concerned with the ability to generalize our results. 
The language-agnostic nature of the \alpine{} pruning method allows it to be applied to a wide range of programming languages without the need for language-specific adaptations.
We evaluate the approach on three language models and tasks with different programming languages (C and Java).
Its compatibility with all transformer-based encoder models makes it trivial to integrate into various architectures, enhancing its generalization.
Finally, the two \gpu{}s used in this study, NVIDIA A$100$ and RTX$2080$, belong to \textit{Ampere} and \textit{Turning} architecture families. These architectures encompass a larger set of other cards which we expect to exhibit the similar reported trends.
\\
\textit{\textbf{Construct validity}}: 
Construct threats to validity are concerned with the degree to which our analyses measure what we claim to analyze.
We employed well-established metrics for measuring predictive performance (such as, accuracy and F1-score), 
computational efficiency (\flops{}, throughput, running time and memory footprint),
and environmental impact through CO\textsubscript{2} emissions rate.
Relying on the traditional metrics for the measured aspects mitigates the potential threats of construct validity and ensures that the variables are accurately captured.
\vspace{-2mm}
\section{Related Work}

\subsection{Pre-trained language models of code}

Pre-trained code language models have demonstrated benefits in various SE tasks, including code search, code and documentation generation, and defect prediction~\cite{Wang2021codet5,Guo2022,Lu2021CodeXGLUE,Sharma2024}. Typically based on the transformer architecture~\cite{Vaswani2017}, these models can broadly be categorized into decoder-only, encoder-only, and encoder-decoder language models~\cite{min2023recent}.

\smallskip
\noindent\textbf{Decoder-only architectures} are typically pre-trained in an autoregressive manner, \ie{} predicting the subsequent token based on preceding ones. In the realm of code, examples include \textsc{InCoder}~\cite{Fried2023}, \textsc{StarCoder 2}~\cite{Lozhkov2024Starcoder}, and \textsc{CodeGen 2}~\cite{nijkamp2023codegen2}. These architectures excel in code generation tasks.

\noindent\textbf{Encoder-only architectures} are primarily pre-trained on the masked language modeling task \ie{} these models are trained to predict masked words given the remaining sequence as context. Examples include \cb{}~\cite{Feng2020} and \gcb{}~\cite{Guo2021}. They are frequently used for classification tasks or as a means of feature extraction.

\noindent\textbf{Encoder-decoder architectures} undergo pre-training through conditional generation tasks \ie{} generating text based on provided input text, \eg{} for code summarization or generating code from natural language descriptions. Sample architectures include {\sc CodeT5}~\cite{Wang2021codet5}, {\sc CodeT5}+\cite{wang2023codet5+}, and \textsc{plbart}~\cite{ahmad2021unified}.
\vspace{-2mm}
\subsection{Optimized transformers}
The techniques to enhance the effectiveness of a transformer model while maintaining its performance fall into three categories---knowledge distillation, quantization, and pruning~\cite{Shi2023,Kim2022LTP,zhu2023survey}.

\smallskip
\noindent\textbf{Knowledge distillation.} This technique trains a smaller model (the \textit{student} model) to imitate the behavior of a larger model (the \textit{teacher} model)~\cite{gou2021knowledge}. In a SE context, Shi \etal{}~\cite{Shi2023} introduced \textit{Compressor}, a framework that employs task-specific knowledge distillation to improve the efficiency of the final transformer model. The authors perform a neural architectural search via a genetic algorithm, followed by training the selected model using knowledge distillation. In their experiments, they use \cb{}~\cite{Feng2020} and \gcb{}~\cite{Guo2021} as teacher models to tackle the tasks of vulnerability prediction and code clone detection. \textit{Compressor} still requires the fine-tuning of the teacher model, which incurs overhead throughout the process. \alpine{} allows to reduce the computational costs of fine-tuning these models, minimizing such an overhead.

\noindent\textbf{Quantization.} It converts the model's parameters from 32-bit floating-point numbers (\texttt{float32} data type) to lower-precision formats, such as 8-bit integers (\texttt{int8} data type)~\cite{nagel2021white}. 
Depending on the component of quantization, these techniques can be categorized as quantization of weights only \cite{xu2023qa,dettmers2024qlora,kim2023squeezellm} and quantization of both weights and activations \cite{zafrir2019q8bert,yao2022zeroquant,dettmers2023spqr}. Although quantization reduces the memory footprint of these models, it may also reduce the inference speed and accuracy~\cite{jin2024comprehensive}.

\noindent\textbf{Pruning.} Pruning techniques can be divided into \textit{weight pruning} and \textit{token pruning}~\cite{Kim2022LTP}. 
Liu \etal{}\cite{liu2021ebert} adopt weight pruning to prune unimportant heads and channels from the \mha{} and \ffnn{} respectively. 
Zang \etal{}~\cite{Zhang2022DietCode} offer DietCode---a token pruning approach.
To build DietCode, Zang \etal{} carry out an empirical investigation to determine the significance of statements and tokens for \cb{}. Subsequently, based on these findings, they employ a 0-1 Knapsack approach to prune the input sequence. In contrast, our approach is performed adaptively at each layer. Additionally, during the pruning step, we could aggregate the pruned tokens into a single one to prevent excessive loss of information. Moreover, our method is language-agnostic, facilitating easy adaptation to diverse programming languages and extending applicability to natural language inputs, particularly beneficial in minimizing computational overhead in tasks like code search or code generation from documentation.

As for LTP~\cite{Kim2022LTP}, in addition to the main limitations previously mentioned where concrete computational speed-up is not achieved, there exist some limitations concerning the methodology. First, the reported number of FLOPs is manually calculated, under the assumption that matrix multiplication is optimized when zeroed-out vectors are present. Furthermore, given the placement of the pruning layer, these representations do not remain zeroed-out given the presence of skip connections and normalization layers\footnote{This was also pointed out by the issues raised in the replication package repository: \url{https://t.ly/zBMGE} and \url{https://t.ly/kLCiU}.}. Moreover, it does \emph{not} mask tokens during the evaluation,\ie{} the full sequence, threatening the validity of the accuracy results.
\vspace{-2mm}
\section{Conclusions and Future Work}
This study addresses the challenges,
specifically computational efficiency and environmental impact,
associated with language models of code. 
We propose \alpine{}, a pruning method that reduces the input sequence while maintaining predictive performance. 
Our experiments on three language models and datasets with different programming languages
demonstrate that it significantly improves computational efficiency that, in turn, reduces CO\textsubscript{2} emissions.
The results show that it is particularly effective on consumer-grade \gpu{}s, enabling the usage of these models on more accessible hardware.
Furthermore, the programming language-agnostic nature of \alpine{} and its compatibility with Transformer-based models enhance the generalizability of our findings. 
We envision to extend this work along two axes. 
First, to assess \alpine{} on encoder-decoder, and decoder-only models. Second, we aim to investigate other scoring importance functions and more adaptable merging strategies. 

\section{Data Availability}
Our replication package including source code and data is available online through this repository~\cite{replication}.
\begin{acks}
This work is partially funded by by the Natural Sciences and Engineering Research Council of Canada (NSERC) through grant NSERC Discovery RGPIN/$04903$.
\end{acks}

\bibliographystyle{ACM-Reference-Format}
\bibliography{refs}

\end{document}